\documentclass[journal,twoside,web,xcdraw,svgnames]{ieeecolor}
\usepackage{generic}
\usepackage{cite}
\usepackage{amsmath,amssymb,amsfonts}

\usepackage{hyperref}

\usepackage[T1]{fontenc}
\usepackage[utf8]{inputenc}
\usepackage{graphicx}
\usepackage{amsmath,systeme,amssymb,amsfonts}
\usepackage[version=4]{mhchem}
\let\labelindent\relax
\usepackage{siunitx}
\usepackage{enumitem}
\usepackage{longtable,tabularx}
\usepackage{comment}
\usepackage{textgreek}
\usepackage{float}
\usepackage[utf8]{inputenc}
\usepackage{tikz, pgfplots}
\usepackage{textcomp}
\usepackage{tcolorbox} 
\usepackage{graphicx}
\usepackage{amsmath}

\usepackage[version=4]{mhchem}
\usepackage{siunitx}
\usepackage{longtable,tabularx}
\usetikzlibrary{positioning, shapes, arrows}
\usepackage{thmtools}
\usepackage{textcomp}
\usepackage{soul}
\usepackage{float,amsfonts,amsthm,color,xcolor}
\usepackage{amssymb}
\usepackage{graphics} 
\usepackage{epstopdf}
\usepackage{booktabs} 
\usepackage{caption, subcaption}
\usepackage{breqn}
\usepackage{array, mathtools}
\usepackage{pgfplots}
\usepackage{siunitx}
\usepackage{algorithm}
\usepackage[noend]{algpseudocode}
\usepackage{tabulary}
\usetikzlibrary{arrows.meta}
\usepackage{stmaryrd} 
\usepackage{varwidth}
\usepackage{comment}
\usetikzlibrary{positioning}

\usepackage{amsthm} \theoremstyle{definition} \newtheorem{theorem}{Theorem} \newtheorem{definition}{Definition} \newtheorem{assumption}{Assumption}  \newtheorem{lemma}{Lemma} \newtheorem{remark}{Remark}  \newtheorem{example}{Example}

\def\BibTeX{{\rm B\kern-.05em{\sc i\kern-.025em b}\kern-.08em
    T\kern-.1667em\lower.7ex\hbox{E}\kern-.125emX}}
\begin{document}
\title{Stabilization Limits of Payoff-Based Higher-Order Replicator Dynamics}
\author{Hassan Abdelraouf, Vijay Gupta, and Jeff S. Shamma 
\thanks{H. Abdelraouf and V. Gupta are with the School of Electrical and Computer Engineering, Purdue University, USA. J. S. Shamma is with the Department of Industrial and Enterprise Systems Engineering, University of Illinois at Urbana-Champaign,  USA. (e-mails: abdelra5@purdue.edu, gupta869@purdue.edu, jshamma@illinois.edu) }}
\maketitle

\begin{abstract}
Replicator dynamics (RD) is a fundamental model in learning in games, connecting evolutionary game theory and online learning. This paper studies payoff-based higher-order variants of RD represented as a cascade interconnection between an integrator in parallel with an auxiliary linear time-invariant (LTI) system and the softmax mapping. We investigate learnability of Nash equilibria under this Nash-stationary learning rule.  First, we revisit recent results that establish convergence to Nash Equilibrium whenever the auxiliary LTI system is strictly passive and prove  a converse passivity result: if the auxiliary LTI system is not passive, then there exists a static strictly contractive game whose interior Nash equilibrium is unstable under the closed-loop learning dynamics. Second, we show that there exists a class of games with isolated interior Nash equilibria that cannot be locally asymptotically stabilized by any payoff-based higher-order RD whose auxiliary LTI system is asymptotically stable and strictly proper. Finally, we show that if Nash stationarity (i.e., all Nash equilibria are stationary points of the learning dynamics) is relaxed, then generalized exponential RD (Ex-RD) can locally asymptotically stabilize a logit equilibrium for any continuously differentiable game. The stabilized equilibrium  can be viewed as an entropy-regularized approximate Nash equilibrium.
\end{abstract}

\begin{IEEEkeywords} 
Replicator dynamics, passivity, learning in games 

\end{IEEEkeywords}
\section{Introduction} \label{sec:introduction}

Population games \cite{sandholm2010population,hofbauer1998evolutionary} provide a framework for modeling strategic interactions in large populations, where the population state describes the fraction of agents using each strategy and the payoff vector assigns payoffs to these strategies. Learning dynamics, or evolutionary dynamics, describe how the population state changes in response to the payoff signal. From a systems perspective, the game maps strategies to payoffs, while the learning dynamics define a map from payoffs to strategies. Therefore, the resulting learning process can be viewed as a closed-loop feedback interconnection between the game and the learning model \cite{fox2013population,arcak2020dissipativity,park2018passivity,abdelraouf2025passivity}. In particular, passivity theory has proven to be an important tool to analyze stability and convergence to Nash equilibria~\cite{abdelraouf2025passivity}, with notions such as \(\delta\)-passivity used in stable games \cite{fox2013population,park2019population}, equilibrium-independent passivity used in zero-sum games \cite{gao2020passivity}, and counterclockwise dissipativity used in potential games~\cite{martins2025counterclockwise}. 

A widely studied learning rule in population games is the replicator dynamics (RD). Standard replicator dynamics is a first-order dynamic since the payoff signal drives a single score variable through a pure integrator, with no additional internal state. Higher-order learning dynamics introduce auxiliary states driven either by payoffs \cite{arslan2006anticipatory,laraki2013higher,hankins2025nash} or by strategies \cite{gao2023second}. These auxiliary states can incorporate prediction or memory to improve performance \cite{abdelraouf2026can} and enhance stability properties \cite{toonsi2023higher,arslan2006anticipatory}. In this paper, we focus on payoff-based higher-order variants of replicator dynamics (RD), where the payoff signal is passed through an integrator in parallel with an additional linear time-invariant (LTI) system. The resulting payoff-to-score transfer function has the form \(G(s)I_n\), where \( G(s)=\frac{1}{s}+g(s), \) and the strategy is obtained through the softmax map. Equivalently, the learning dynamics can be viewed as a cascade interconnection between \((1/s+g(s))I_n\) and the softmax mapping.

Recent work~\cite{abdelraouf2026convergence} showed an important connection of passivity of $g(s)$ and the convergence of payoff-based higher order RD. Specifically, if  \(g(s)\) is strictly passive, then payoff-based higher-order RD locally stabilizes an isolated interior Nash equilibrium in any contractive game. Here, we are interested in two questions to complete the relation between passivity and stability of payoff-based higher order RD. First, for contractive games, is passivity of $g(s)$ also necessary for stability? This question was partially answered in~\cite{mabrok2016passivity}, which showed that if in the absence of \(\delta\)-passivity, a {\em dynamic} contractive
game can be constructed that destabilizes the feedback interconnection. However, whether passivity suffices for {\em static} contractive games remains unknown. Second, even for non-contractive games, is passivity (or a related property) of \(g(s)\) sufficient to at least locally stabilize any isolated interior Nash equilibrium?

The main contributions of this paper are threefold. First, we establish a converse to the sufficiency result in~\cite{abdelraouf2026convergence}: while {\em strict passivity} of \(g(s)\) guarantees local stability of an isolated interior Nash equilibrium in contractive games, passivity is necessary for universal local stability over this class. Specifically, if \(g(s)\) is not passive, then there exists a {\em static} strictly contractive game whose interior Nash equilibrium is unstable under the closed-loop learning dynamics;  (Section~\ref{sec:passivity_necessary}). Second, we show that there exists a class of (non-contractive) games with isolated interior Nash equilibria that cannot be locally asymptotically stabilized by any payoff-based higher-order RD whose payoff-to-score transfer function has the form \(G(s)=1/s+g(s)\), where \(g(s)\) is asymptotically stable and strictly proper; (Section~\ref{sec:stabilization_limit}). Third, we show that this limitation can be bypassed if Nash stationarity is relaxed. In particular, Ex-RD admits a locally asymptotically stable logit equilibrium for any continuously differentiable game. This equilibrium is generally not Nash, but it can be interpreted as an entropy-regularized approximation of a Nash equilibrium; (Section~\ref{sec:beyond_nash_stationarity}).

\noindent\textbf{Notation:}
Let \(\mathbb R_{\ge 0}:=[0,\infty)\) and \(\mathbb R_{>0}:=(0,\infty)\). For
\(x\in\mathbb R^n\), \(x_i\) denotes its \(i\)th component,
\(\operatorname{diag}(x)\) denotes the diagonal matrix with diagonal entries
\(x_i\), and \(\|x\|\) denotes the Euclidean norm. For a matrix \(A\),
\(\|A\|_2\) denotes the induced Euclidean norm. The identity matrix is denoted
by \(I_n\), and \(\mathbf 1_n\) and \(\mathbf 0_n\) denote the vectors of all
ones and all zeros, respectively. For matrices \(A\) and \(B\),
\(A\otimes B\) denotes their Kronecker product, and
\[
    \operatorname{blkdiag}(A,B)
    =
    \begin{bmatrix}
        A & 0\\
        0 & B
    \end{bmatrix}
\]
denotes their block-diagonal concatenation. For a symmetric matrix \(A\),
\(A\succeq0\) and \(A\succ0\) denote positive semidefiniteness and positive
definiteness, respectively.

The probability simplex is
\(
    \Delta_n
    :=
    \{x\in\mathbb R_{\ge 0}^n:\mathbf 1_n^\top x=1\},
\)
with interior \(\operatorname{Int}(\Delta_n)\). Its tangent subspace is
\(
    \mathcal Z
    :=
    \{z\in\mathbb R^n:\mathbf 1_n^\top z=0\}.
\)
Throughout, \(N\in\mathbb R^{n\times(n-1)}\) denotes a matrix whose columns form
an orthonormal basis of \(\mathcal Z\); hence
\(
    N^\top N=I_{n-1},
    \;
    N^\top\mathbf 1_n=\mathbf 0_{n-1} .
\)
The softmax mapping \(\sigma:\mathbb R^n\to\operatorname{Int}(\Delta_n)\) is
defined by
\[
    \sigma(v)_i
    =
    \frac{e^{v_i}}{\sum_{j=1}^n e^{v_j}},
    \qquad i=1,\ldots,n .
\]
Its Jacobian is
\(
    \nabla\sigma(v)
    =
    \operatorname{diag}(\sigma(v))
    -
    \sigma(v)\sigma(v)^\top .
\)
Moreover, \(\nabla\sigma(v)\succeq0\),
\(
    \ker(\nabla\sigma(v))=\operatorname{span}\{\mathbf 1_n\},
    \;
    \operatorname{rank}(\nabla\sigma(v))=n-1,
\)
and
\(
    z^\top\nabla\sigma(v)z>0,
    \;
    \forall z\in\mathcal Z\setminus\{0\}.
\)
Thus
\(
    \widetilde Q(v):=N^\top\nabla\sigma(v)N
\)
is positive definite. Finally, \(\sigma\) is invariant under translations along
\(\mathbf 1_n\):
\(
    \sigma(v+c\mathbf 1_n)=\sigma(v),
    \;
    \forall v\in\mathbb R^n,\ c\in\mathbb R .
\)

\section{Preliminaries}

\paragraph{Passivity} For an asymptotically stable LTI system
\(
    \dot x=Ax+Bu,
    \;
    y=Cx+Du,
\)
with transfer function matrix
\(
    H(s)=C(sI-A)^{-1}B+D,
\)
passivity is equivalent to positive realness \cite{khalil2002nonlinear}:
\(
    H(j\omega)+H(j\omega)^*\succeq0,
    \;
    \forall \omega\in\mathbb R .
\)
Throughout the paper, passivity is used only for scalar LTI system 
\(g(s)\). Thus, \(g(s)\) is passive if
\(
    \operatorname{Re}\{g(j\omega)\}\ge0,
    \;
    \forall \omega\in\mathbb R .
\)

\paragraph{Population Games and Nash Stationarity}
We consider single-population games \cite{sandholm2010population} with
population state \(x\in\Delta_n\) and payoff mapping
\(
    p=\mathcal G(x),
    \;
    \mathcal G:\Delta_n\to\mathbb R^n .
\)
A state \(x^*\in\Delta_n\) is a Nash equilibrium (NE) of \(\mathcal G\) if
\[
    y^\top\mathcal G(x^*) - {x^*}^{\top}\mathcal G(x^*)
    \le 0,
    \qquad
    \forall y\in\Delta_n .
\]
The set of Nash equilibria is denoted by \(\operatorname{NE}(\mathcal G)\).
For \(\varepsilon\ge0\), a state \(x\in\Delta_n\) is an
\(\varepsilon\)-Nash equilibrium if
\[
    \max_{y\in\Delta_n}
    \left(
        y^\top\mathcal G(x)-x^\top\mathcal G(x)
    \right)
    \le \varepsilon .
\]
The quantity on the left hand side is called the Nash gap. Thus,
\(x\in\operatorname{NE}(\mathcal G)\) if and only if its Nash gap is zero.
In particular, if \(x^*\in\operatorname{Int}(\Delta_n)\), then
\(x^*\in\operatorname{NE}(\mathcal G)\) if and only if
\(
    \mathcal G(x^*)=\alpha\mathbf 1_n
\)
for some \(\alpha\in\mathbb R\) \cite{sandholm2010population}.

\begin{definition}[Contractive games \cite{hofbauer2009stable}]
\label{def:contractive_game}
A population game \(\mathcal G\) is contractive if
\(
    (x-y)^\top\big(\mathcal G(x)-\mathcal G(y)\big)\le0,
    \;
    \forall x,y\in\Delta_n .
\)
It is strictly contractive if equality holds only when \(x=y\). If
\(\mathcal G\) is continuously differentiable, then \(\mathcal G\) is
contractive if and only if
\(
    z^\top\nabla\mathcal G(x)z\le0,
    \;
    \forall x\in\Delta_n,\; z\in\mathcal Z .
\)
\end{definition}
\noindent A strictly contractive game has a unique NE~\cite{sandholm2015population}.

\begin{definition}[Nash stationarity \cite{sandholm2010population}]
\label{def:nash_stationarity}
A learning dynamic model written in strategy form
\(
    \dot x=\mathcal V(x,\mathcal G(x))
\)
satisfies Nash stationarity on \(\operatorname{Int}(\Delta_n)\) if, for every
\(x^*\in\operatorname{Int}(\Delta_n)\),
\[
    \mathcal V(x^*,\mathcal G(x^*))=0
    \quad \Longleftrightarrow \quad
    x^*\in\operatorname{NE}(\mathcal G).
\]
\end{definition}
\paragraph{Replicator dynamics and its higher-order variants}
Replicator dynamics (RD) is a widely studied learning rule that satisfies Nash stationarity on
\(\operatorname{Int}(\Delta_n)\) \cite{sandholm2009pairwise}. It can be viewed as a continuous-time analogue of
multiplicative weights update (MWU) \cite{abdelraouf2025passivity}, and represented by
\begin{equation}
\label{eq:RD}
    \dot z=p,
    \qquad
    x=\sigma(z),
\end{equation}
where \(p\in\mathbb R^n\) is the payoff vector, \(z\in\mathbb R^n\) is the
accumulated payoff score, and \(x\in\Delta_n\) is the induced mixed strategy. Thus, RD can be viewed as the cascade interconnection of the integrator \((1/s)I_n\) and the softmax map \(\sigma\).  Payoff-based higher-order variants of RD are obtained by augmenting the score dynamics with an additional LTI system driven by the payoff signal \cite{abdelraouf2026convergence}. Specifically, consider dynamics of the form \begin{equation} \label{eq:predictive_RD} 
\begin{aligned}
\dot r &= p,\\
\dot \xi &= (A_g\otimes I_n)\xi+(B_g\otimes I_n)p,\\
m &= (C_g\otimes I_n)\xi,\\
z &= r+m,\\
x &= \sigma(z), 
\end{aligned} \end{equation}
where \(r\in\mathbb R^n\) is the RD score, \(\xi\in\mathbb R^{mn}\) is the state of the added LTI system, and \(m\in\mathbb R^n\) is its output. The matrices \(A_g\in\mathbb R^{m\times m}\), \(B_g\in\mathbb R^{m\times1}\), and \(C_g\in\mathbb R^{1\times m}\) define the transfer function \( g(s)=C_g(sI_m-A_g)^{-1}B_g . \) Equivalently, \eqref{eq:predictive_RD} is the cascade interconnection of \( G(s)=\left(\frac{1}{s}+g(s)\right)I_n \) and the softmax map, as shown in Fig.~\ref{fig:higher_order_RD}.

 \begin{figure}[H]
     \centering
     \includegraphics[width=0.5\linewidth]{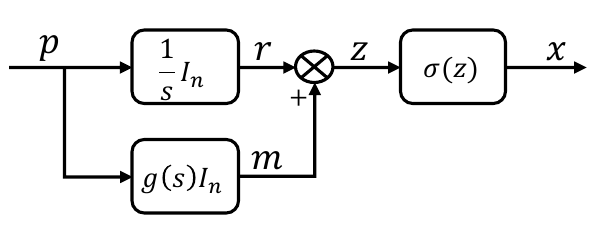}
     \caption{Representation of payoff-based higher-order RD \eqref{eq:predictive_RD}}
     \label{fig:higher_order_RD}
 \end{figure}
The
induced strategy dynamics under the payoff-based higher-order RD
\eqref{eq:predictive_RD} satisfy Nash stationarity on
\(\operatorname{Int}(\Delta_n)\) if the added LTI system characterized by
\(g(s)\) has a minimal realization and is asymptotically stable
\cite[Lemma~1]{abdelraouf2026convergence}.  Accordingly, we make the following assumption throughout, which additionally imposes strict properness to exclude direct feedthrough and hence avoid algebraic loops when the learning dynamics
are interconnected with a static game. 
\begin{assumption}
\label{ass:filter}
The payoff filter \(g(s)\) is real-rational and strictly proper, and admits a
minimal realization
\(
    g(s)=C_g(sI_m-A_g)^{-1}B_g,
\)
where \(A_g\) is Hurwitz.
\end{assumption}

\section{Passivity as a Necessary Condition for Universal Local Stability in Contractive Games}\label{sec:passivity_necessary}
For contractive games,~\cite{abdelraouf2026convergence} showed that  strict passivity of \(g(s)\) guarantees local asymptotic stability of an isolated interior Nash equilibrium with a payoff-based higher-order RD. Our first result establishes a necessity statement. Specifically, while strict passivity is sufficient for local stability, passivity is necessary for universal local stability over contractive games.
\begin{theorem} \label{thm:nonpassive_destabilization} Consider the payoff-based higher-order RD dynamics \eqref{eq:predictive_RD} under Assumption~\ref{ass:filter}. Suppose that \(g(s)\) is not passive. Then, for every \(n\ge 3\), there exists a strictly contractive matrix game \(\mathcal{G}(x)= F x\) with an isolated interior Nash equilibrium \(x^*\in\operatorname{Int}(\Delta_n)\) such that the linearization of the closed-loop payoff-based higher-order RD dynamics around the equilibrium corresponding to \(x^*\) has an eigenvalue in the open right-half plane. Consequently, \(x^*\) is unstable. \end{theorem}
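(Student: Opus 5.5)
The plan is to build the destabilizing game directly from a frequency $\omega_0$ at which $g$ fails to be passive. At such a frequency the loop gain $G(s)=1/s+g(s)$ has negative real part. I then choose a strictly contractive game whose linearized loop has a characteristic root at a prescribed point $s_0$ slightly to the right of $j\omega_0$.

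\textbf{Step 1 (locating $s_0$).} Since $g$ is not passive, there is $\omega_0\in\mathbb R$ with $\operatorname{Re} g(j\omega_0)<0$. Because $g$ is real-rational, $\operatorname{Re} g(j\omega)$ is even in $\omega$, and because $g$ is strictly proper, $g(j\omega)\to0$ as $|\omega|\to\infty$. Continuity therefore lets us take $\omega_0>0$. For $s_0=\delta+j\omega_0$ we have
\[
\operatorname{Re} G(s_0)=\frac{\delta}{|s_0|^2}+\operatorname{Re} g(s_0).
\]
Since $A_g$ is Hurwitz, $g$ is analytic near the closed right half-plane. Hence, for $\delta>0$ small enough, $\operatorname{Re} G(s_0)<0$, and in particular $G(s_0)\neq 0$. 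Define
\[
\lambda:=\frac{n}{G(s_0)}, \qquad \operatorname{Re}\lambda=\frac{n\operatorname{Re}G(s_0)}{|G(s_0)|^2}<0 .
\]
Write $\lambda=-a+jb$ with $a>0$.

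\textbf{Step 2 (the game).} Let $M\in\mathbb R^{(n-1)\times(n-1)}$ be block diagonal. Its first block is $\begin{bmatrix}-a&b\\-b&-a\end{bmatrix}$, and its remaining diagonal entries equal $-1$; this uses $n\ge3$. Then $M+M^\top\prec0$ and $\lambda$ is an eigenvalue of $M$. Set $F:=NMN^\top$ and take $x^*=\mathbf 1_n/n$.

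\begin{itemize}
\item Since $N^\top\mathbf 1_n=0$, we get $Fx^*=0=0\cdot\mathbf 1_n$, so $x^*$ is an interior NE.
\item For $z\in\mathcal Z\setminus\{0\}$ we have $N^\top z\neq0$, so $z^\top Fz=(N^\top z)^\top M(N^\top z)<0$. Hence the game is strictly contractive, and $x^*$ is its unique, hence isolated, NE.
\end{itemize}

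\textbf{Step 3 (linearization).} The closed-loop equilibrium corresponding to $x^*$ is $r^*=c\mathbf 1_n$, $\xi^*=0$, and $p^*=0$. At this point $\nabla\sigma=\frac1n(I_n-\frac1n\mathbf 1_n\mathbf 1_n^\top)$, so $\nabla\sigma\, N=\frac1n N$. Decompose perturbations as $\delta r=N\tilde r+\alpha\mathbf 1_n$ and likewise for $\xi$ blockwise.

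\begin{itemize}
\item Because $\mathbf 1_n^\top F=0$ and $\sigma$ is shift invariant, the $\mathbf 1_n$-components are decoupled from the rest.
\item On $\mathcal Z$, the linearized system is the feedback interconnection of $G(s)I_{n-1}$, realized minimally by the integrator together with $(A_g,B_g,C_g)\otimes I_{n-1}$, with the static gain $\frac1n M$ and positive feedback.
\item Diagonalizing $M$ over $\mathbb C$ splits this into scalar loops, one for each eigenvalue $\mu$ of $M$. The closed-loop poles of each scalar loop are the roots of $1-\frac{\mu}{n}G(s)=0$, equivalently of $s\,d_g(s)-\frac{\mu}{n}\big(d_g(s)+s\,n_g(s)\big)=0$, where $g=n_g/d_g$ is coprime.
\item Minimality of the realization ensures there is no pole/zero cancellation that could hide a root.
\item For $\mu=\lambda$, the point $s_0$ satisfies $1-\frac{\lambda}{n}G(s_0)=0$ by construction.
\end{itemize}

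Hence $s_0$, with $\operatorname{Re}s_0=\delta>0$, is an eigenvalue of the Jacobian, and Lyapunov's indirect method gives instability of $x^*$.

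I expect the main obstacle to be Step 3. It requires care in two places. First, the reduction to $\mathcal Z$ must be justified rigorously, including the neutral mode along $\mathbf 1_n$, which contributes only eigenvalues at $0$ or eigenvalues of $A_g$. Second, one must confirm that the roots of the scalar characteristic equation are exactly eigenvalues of the state-space Jacobian. This I would verify by writing the closed-loop matrix for each scalar loop and computing its determinant with a Schur complement.
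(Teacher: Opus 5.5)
Your proposal is correct and follows essentially the same route as the paper: choose $s_0=\delta+j\omega_0$ with $\operatorname{Re}G(s_0)<0$. Then build a matrix with negative definite symmetric part that has eigenvalue $n/G(s_0)$ (this is the paper's $nM$), lift it via $F=N(\cdot)N^\top$, and read off $s_0$ as a root of the factor $1-\frac{\lambda}{n}G(s)$ of the reduced characteristic equation. Your extra care in Step 3 (decoupling the $\mathbf 1_n$ direction and checking via a Schur complement that $s_0$ is a genuine Jacobian eigenvalue) tightens points the paper treats informally, though minimality is not needed for that direction: since $s_0$ is neither $0$ nor an eigenvalue of $A_g$, the determinant identity alone shows that $s_0$ is a closed-loop eigenvalue.
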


\begin{proof}
Let
\(
    G(s)=\frac{1}{s}+g(s).
\)
Since \(g(s)\) is not passive, there exists \(\omega_0\neq0\) such that
\(
    \operatorname{Re}\{g(j\omega_0)\}<0.
\)
Because
\(
    \operatorname{Re}\left\{\frac{1}{j\omega_0}\right\}=0,
\)
we have
\(
    \operatorname{Re}\{G(j\omega_0)\}
    =
    \operatorname{Re}\{g(j\omega_0)\}<0.
\)
Moreover, \(\omega_0\neq0\), and since \(g(s)\) is asymptotically stable, \(G(s)\) has no pole at \(j\omega_0\). Thus \(G\) is continuous in a neighborhood of \(j\omega_0\). Hence, for sufficiently small \(\varepsilon>0\), the point \( s_0=\varepsilon+j\omega_0 \) satisfies \( \operatorname{Re}(s_0)>0, \; \operatorname{Re}\{G(s_0)\}<0 . \)
Define
\(
    \mu:=\frac{1}{G(s_0)}.
\)
Since \(\operatorname{Re}\{G(s_0)\}<0\), we have \(G(s_0)\neq0\). Writing
\(\mu=-a+jb\), it follows that
\[
    a
    =
    -\operatorname{Re}(\mu)
    =
    -\frac{\operatorname{Re}\{G(s_0)\}}{|G(s_0)|^2}
    >0.
\]

We now construct a strictly contractive matrix game whose isolated interior
Nash equilibrium is \(x^*=\frac{1}{n}\mathbf 1_n\). Let
\(N\in\mathbb R^{n\times(n-1)}\) be an orthonormal basis of the tangent space
of \(\Delta_n\). At \(x^*\), the softmax Jacobian is
\[
    Q^*
    =
    \operatorname{diag}(x^*)-x^*{x^*}^\top
    =
    \frac{1}{n}
    \left(
        I_n-\frac{1}{n}\mathbf 1_n\mathbf 1_n^\top
    \right).
\]
Therefore, its restriction to the tangent space is
\(
    \widetilde Q
    :=
    N^\top Q^*N
    =
    \frac{1}{n}I_{n-1}.
\)
Since \(n\ge3\), the tangent space has dimension at least two. Define
\[
    M=\operatorname{blkdiag} \left( \begin{bmatrix}
        -a & -b\\
        b & -a
    \end{bmatrix} , -I_{n-3}\right)
\]
where the second block is absent when \(n=3\). Then \(M\) has eigenvalues
\(
    \mu, \bar\mu, -1,\ldots,-1,
\)
and
\[
    \frac{M+M^\top}{2}
    = \operatorname{blkdiag}\left(-aI_2,-I_{n-3}\right)
    \prec0, \; \text{because } a>0.
\]
Define the reduced game matrix
\(
    \widetilde F:=nM,
\) and lift it to the original coordinates by
\(
    F:=N\widetilde F N^\top.
\)
Consider the matrix game
\(
    \mathcal G(x)=Fx.
\)
Since \(N^\top\mathbf 1_n=\mathbf{0}_{n-1}\), we have
\(
    F\mathbf 1_n=\mathbf{0}_n.
\)
Hence
\(
    \mathcal G(x^*)=Fx^*=\mathbf{0}_n,
\)
so all pure strategies have equal payoff at \(x^*\). Therefore, \(x^*\) is an
interior Nash equilibrium.

We next show that the game is strictly contractive. For any
\(z\in\mathcal Z\), there exists \(\eta\in\mathbb R^{n-1}\) such that
\(z=N\eta\). Thus
\[
    z^\top Fz
    =
    \eta^\top \widetilde F\eta
    =
    n\eta^\top M\eta
    =
    n\eta^\top\frac{M+M^\top}{2}\eta.
\]
Since \((M+M^\top)/2\prec0\), it follows that
\(
    z^\top Fz<0,
    \quad
    \forall z\in\mathcal Z\setminus\{0\}.
\)
Therefore, \(\mathcal G(x)=Fx\) is strictly contractive. In particular,
\(x^*\) is isolated. Moreover,
\(
    N^\top \nabla \mathcal G(x^*)N
    =
    N^\top FN
    =
    \widetilde F
    =
    nM,
\)
which is nonsingular.

It remains to show that the corresponding equilibrium is unstable. Since \(\mathcal G(x^*)=Fx^*=\mathbf{0}_n\), an equilibrium of the closed-loop learning
dynamics corresponding to \(x^*\) is
\(
    r^*=\mathbf 0_n,
    \xi^*=\mathbf 0_{mn},
    m^*=\mathbf 0_n,
    x^*=\sigma(r^*)=\frac1n\mathbf 1_n .
\) Linearizing the closed-loop dynamics
around this equilibrium gives
\begin{equation}\label{eq:lineaized_dynamics}
    \begin{aligned}
    \dot{\Delta r} &= \Delta p,\\
    \dot{\Delta \xi}
        &=
        (A_g\otimes I_n)\Delta \xi
        +
        (B_g\otimes I_n)\Delta p,\\
    \Delta m
        &=
        (C_g\otimes I_n)\Delta \xi,\\
    \Delta x
        &=
        Q^*(\Delta r+\Delta m),\\
    \Delta p
        &=
        F\Delta x .
\end{aligned}
\end{equation}

Since \(x(t)\in\Delta_n\), every strategy perturbation satisfies
\(\Delta x\in\mathcal Z\). Moreover, since
\(
    \mathbf 1_n^\top F
    =
    \mathbf 1_n^\top N\widetilde F N^\top
    =
    \mathbf 0_n^\top ,
\)
we have \(\Delta p=F\Delta x\in\mathcal Z\). Hence the induced
strategy-payoff linearization evolves on the tangent space \(\mathcal Z\).

Let
\(
    \mathcal N:=I_m\otimes N.
\)
Projecting the linearization onto \(\mathcal Z\), write 
\(
    \Delta r=N\delta r,
    \Delta x=N\delta x,
    \Delta p=N\delta p,
    \Delta m=N\delta m,
    \Delta \xi=\mathcal N\delta \xi .
\)
Premultiplying the linearized equations by \(N^\top\) and
\(\mathcal N^\top\), respectively, gives
\begin{equation}\label{eq:projected_dynamics}
    \begin{aligned}
    \dot{\delta r} &= \delta p,\\
    \dot{\delta \xi}
        &=
        (A_g\otimes I_{n-1})\delta \xi
        +
        (B_g\otimes I_{n-1})\delta p,\\
    \delta m
        &=
        (C_g\otimes I_{n-1})\delta \xi,\\
    \delta x
        &=
        \widetilde Q(\delta r+\delta m),\\
    \delta p
        &=
        \widetilde F\delta x .
\end{aligned}
\end{equation}

Therefore, in the frequency domain,
\(
    \delta r=\frac{1}{s}\delta p, 
    \delta m=g(s)\delta p.
\)
Thus
\(
    \delta x
    =
    \widetilde Q
    \left(
        \frac{1}{s}+g(s)
    \right)\delta p
    =
    \widetilde Q G(s)\delta p.
\)
Using \(\delta p=\widetilde F\delta x\), we obtain
\(
    \delta x
    =
    \widetilde Q G(s)\widetilde F\,\delta x.
\)
Equivalently, the reduced characteristic equation is
\(
    \det\!\left(I_{n-1}-G(s)\widetilde Q\widetilde F\right)=0.
\)
By construction,
\(
    \widetilde Q\widetilde F
    =
    \frac{1}{n}I_{n-1}\cdot nM
    =
    M.
\)
Since \(M\) has eigenvalue \(\mu\), the characteristic equation contains the
factor
\(
    1-\mu G(s)=0.
\)
At \(s=s_0\),
\[
    1-\mu G(s_0)
    =
    1-\frac{1}{G(s_0)}G(s_0)
    =
    0.
\]
Thus \(s_0\) is a closed-loop pole of the reduced linearized dynamics. Since \(\operatorname{Re}(s_0)>0\), the linearization has a mode in the open right-half plane. Therefore,
by Lyapunov's indirect method, the equilibrium corresponding to \(x^*\) is
unstable.
\end{proof}

\begin{example} \label{ex:unstable_F}
Consider the payoff-based higher-order RD \eqref{eq:predictive_RD} with
\(
    g(s)=\frac{1}{(s+1)(s+2)},
    \;
    G(s)=\frac{1}{s}+g(s).
\)
Then
\(
    g(j\omega)=\frac{1}{2-\omega^2+j3\omega},
\)
and hence
\(
    \operatorname{Re}\{g(j\omega)\}
    =
    \frac{2-\omega^2}{(2-\omega^2)^2+9\omega^2}.
\)
Thus, \(g(s)\) is not passive since
\(
    \operatorname{Re}\{g(j\omega)\}<0,
    \;
    \forall |\omega|>\sqrt{2}.
\)

Choose
\(
    s_0=0.05+j2.
\)
Then
\(
    G(s_0)\approx -0.0317-0.6478j,
\)
and therefore
\(
    \mu:=\frac{1}{G(s_0)}
    \approx
    -0.0752+1.5400j.
\)
Thus, writing \(\mu=-a+jb\), we have
\(
    a=0.0752,
    \;
    b=1.5400.
\)
Let
\[
    M=
    \begin{bmatrix}
        -a & -b\\
        b & -a
    \end{bmatrix}.
\]
For \(n=3\), take \(\widetilde F=3M\), and use the tangent-space basis
\begin{equation}
\label{eq:N_projection}
    N=
    \begin{bmatrix}
        \frac{1}{\sqrt2} & \frac{1}{\sqrt6}\\[1mm]
        -\frac{1}{\sqrt2} & \frac{1}{\sqrt6}\\[1mm]
        0 & -\frac{2}{\sqrt6}
    \end{bmatrix}.
\end{equation}
Define the matrix game
\(
    \mathcal G(x)=Fx,
    \;
    F=N\widetilde F N^\top .
\)
This gives
\[
    F\approx
    \begin{bmatrix}
        -0.1505 & -2.5920 &  2.7425\\
         2.7425 & -0.1505 & -2.5920\\
        -2.5920 &  2.7425 & -0.1505
    \end{bmatrix}.
\]
The game is strictly contractive and satisfies
\(
    F\mathbf 1_3=\mathbf 0_3.
\)
Hence
\(
    x^*=\frac{1}{3}\mathbf 1_3
\)
is its unique NE. By
Theorem~\ref{thm:nonpassive_destabilization}, the equilibrium \(x^*\) is
unstable under the payoff-based higher-order RD with
\(g(s)=1/((s+1)(s+2))\) as illustrated in
Fig.~\ref{fig:unstable_F_example_1}.
\end{example}
\begin{figure}[H]
    \centering
    \includegraphics[width=0.5\linewidth]{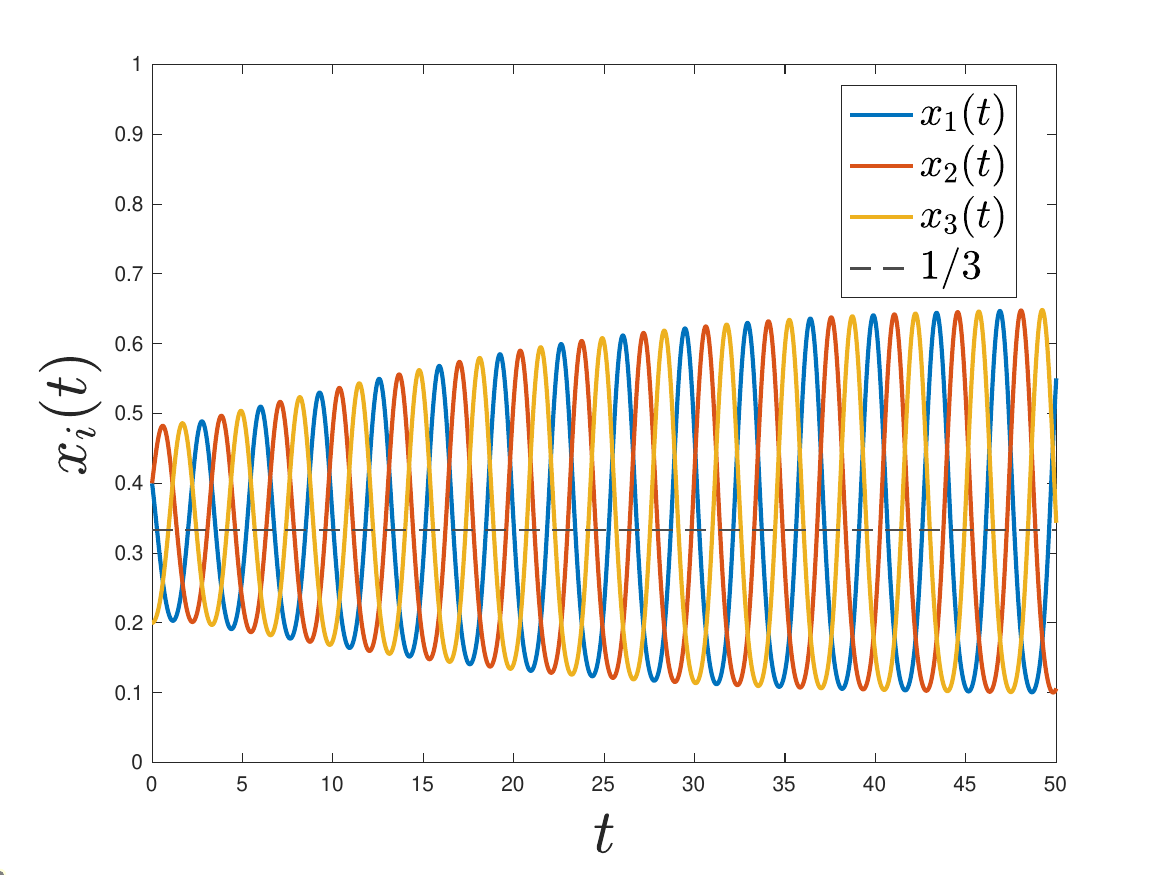}
    \caption{Implementation of payoff-based higher-order RD with $g(s)= \frac{1}{(s+1)(s+2)}$ on the game introduced in Example \ref{ex:unstable_F}}
    \label{fig:unstable_F_example_1}
\end{figure}
\begin{remark}
The construction in Theorem~\ref{thm:nonpassive_destabilization} requires
\(n\ge 3\) because the tangent space \(\mathcal Z\) must have dimension at least two in order to realize the complex conjugate pair \(\mu,\bar\mu\). When \(n=2\), the reduced dynamics are scalar. In particular, for a strictly contractive matrix game,
\(
    \widetilde F=N^\top F N=-k, \;  k>0,
\)
and since \(\widetilde Q>0\), the reduced characteristic equation takes the form
\(
    1+K G(s)=0, \;  K>0.
\)
For the example 
\[
    G(s)=\frac{1}{s}+\frac{1}{(s+1)(s+2)},
\]
this gives
\(
    s(s+1)(s+2)+K\big((s+1)(s+2)+s\big)=0,
\)
or equivalently
\(
    s^3+(3+K)s^2+(2+4K)s+2K=0.
\)
For every \(K>0\), the Routh--Hurwitz conditions for this cubic are satisfied  and all roots lie in the open left-half plane for every \(K>0\). Therefore, for this particular nonpassive filter, interconnection with a static strictly contractive two-strategy matrix game cannot make the interior NE unstable. 
\end{remark}

\section{Stabilization Limitations of Payoff-Based Higher-Order RD}\label{sec:stabilization_limit}

The previous section showed that passivity of the payoff filter \(g(s)\) is
necessary for universal local stability over the class of contractive games. In this section, we ask whether strictly passive, and hence internally stable, payoff filters suffice for non-contractive games as well. This question is related to the strong-stabilization limitation in
\cite{toonsi2023higher}, which shows that stabilizing certain mixed-strategy
Nash equilibria with higher-order gradient play may require internally
unstable learning dynamics. Here, we impose internal stability of the auxiliary LTI system and show a complementary limitation: there exist games whose isolated interior Nash equilibria cannot be locally asymptotically stabilized by any learning dynamics of the form \eqref{eq:predictive_RD}, regardless of the choice of the asymptotically stable strictly proper  \(g(s)\).

\begin{theorem} \label{thm:positive_eig_unstabilizable} 
Let \(x^*\in\operatorname{Int}(\Delta_n)\) be an isolated interior Nash equilibrium of a continuously differentiable game \(\mathcal G:\Delta_n\to\mathbb R^n\). Define \( Q^* := \operatorname{diag}(x^*)-x^*{x^*}^{\top}, \; \widetilde Q:=N^\top Q^*N, \; \widetilde F:=N^\top\nabla\mathcal G(x^*)N . \) Suppose that \(\widetilde Q\widetilde F\) has an eigenvalue \(\lambda\in\mathbb R_{>0}\). Then \(x^*\) cannot be locally asymptotically stabilized by any payoff-based higher-order RD \eqref{eq:predictive_RD} whose payoff filter \(g(s)\) satisfies Assumption~\ref{ass:filter}. \end{theorem}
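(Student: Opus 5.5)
We would reuse the linearization and tangent-space projection from the proof of Theorem~\ref{thm:nonpassive_destabilization}, now for a general \(\mathcal G\). Fix any \(g(s)\) satisfying Assumption~\ref{ass:filter}, and let \((r^*,\xi^*)\) be a closed-loop equilibrium with \(\sigma(r^*+m^*)=x^*\). Since \(x^*\) is an interior NE, \(p^*=\mathcal G(x^*)=\alpha\mathbf 1_n\). Stationarity forces \(p^*=0\) in the \(\dot r=p\) equation modulo the \(\mathbf 1_n\)-direction, which the softmax ignores. The cleanest way to handle this is to work in projected coordinates \(N^\top r\), \(\mathcal N^\top\xi\), \(N^\top p\), where \(N^\top p^*=0\) and \(\xi^*\) has \(\mathcal N^\top\xi^*=0\). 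Linearizing and projecting exactly as in \eqref{eq:projected_dynamics} (with \(\widetilde F=N^\top\nabla\mathcal G(x^*)N\) and the \(\mathbf 1_n\)-components decoupled and irrelevant to \(x\)) gives a linear system in \((\delta r,\delta\xi)\in\mathbb R^{(n-1)(m+1)}\). Its characteristic equation is
\[
\det\!\left(I_{n-1}-G(s)\widetilde Q\widetilde F\right)\cdot(\text{open-loop factors})=0,\qquad G(s)=\tfrac1s+g(s).
\]

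The key step is to diagonalize along the real eigenvalue \(\lambda>0\) of \(\widetilde Q\widetilde F\). Take a real eigenvector \(v\) with \(\widetilde Q\widetilde F v=\lambda v\). The scalar loop \(1-\lambda G(s)=0\) is then a factor of the characteristic polynomial; clearing denominators with \(g=n_g/d_g\) (coprime by minimality, and \(\deg n_g<\deg d_g\)), it reads
\[
\phi(s):=s\,d_g(s)-\lambda\big(d_g(s)+s\,n_g(s)\big)=0 .
\]
We would justify this rigorously via the block-triangular (Schur/Jordan) decomposition of \(\widetilde Q\widetilde F\) and the Kronecker structure \(A_g\otimes I_{n-1}\) etc. In a basis where \(\widetilde Q\widetilde F\) is block upper triangular with \(\lambda\) in a \(1\times1\) block, the closed-loop matrix is block triangular. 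One diagonal block is the \((m+1)\)-dimensional realization of the SISO loop \(\dot\rho=\lambda y\), \(\dot\zeta=A_g\zeta+B_g\lambda y\), \(y=\rho+C_g\zeta\), whose characteristic polynomial is \(\phi\).

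Now \(\phi\) is monic of degree \(m+1\) (since \(s\,n_g\) has degree at most \(m\)), and \(\phi(0)=-\lambda d_g(0)\). Because \(A_g\) is Hurwitz, \(d_g\) is a real Hurwitz polynomial with positive leading coefficient, so \(d_g(0)>0\). Hence \(\phi(0)<0\) while \(\phi(s)\to+\infty\) as \(s\to+\infty\), and by the intermediate value theorem \(\phi\) has a real root \(s^*>0\). This yields a real closed-loop eigenvalue in the open right-half plane, so by Lyapunov's indirect method the equilibrium is unstable. This holds for every admissible \(g\) and every equilibrium lift of \(x^*\), so \(x^*\) is not locally asymptotically stable. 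The root-locus intuition is that \(1-\lambda G\) has an "integrator with the wrong sign", a classical odd-number-of-real-unstable-poles argument.

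The main obstacle is the reduction step. We must show that the projected closed-loop realization actually inherits the eigenvalue \(s^*\), i.e.\ that it is not cancelled. Minimality of \((A_g,B_g,C_g)\) and the explicit block-triangular form handle this. We must also argue that the dynamics along \(\mathbf 1_n\) (the drift \(\alpha\mathbf 1_n\) in \(r\), which makes \(r\) non-stationary if \(\alpha\ne0\)) neither affect \(x\) nor invalidate the linearization. We would handle this by noting that \(\sigma\) is invariant along \(\mathbf 1_n\) and that \((\mathbf 1_n\text{-component of } r,\xi)\) is driven only by \(\mathbf 1_n^\top p\). This decouples it, so instability is assessed on the reduced state \((N^\top r,\mathcal N^\top\xi)\), whose trajectories determine \(x\) through a local diffeomorphism onto \(\operatorname{Int}(\Delta_n)\).
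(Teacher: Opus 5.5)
Your proposal is correct and follows the paper's proof: linearize, project onto the tangent space, extract the scalar factor $1-\lambda G(s)$ from $\det(I_{n-1}-G(s)\widetilde Q\widetilde F)$, and apply the intermediate value theorem to $\phi(s)=(s-\lambda)d(s)-\lambda s\nu(s)$ using $\phi(0)=-\lambda d(0)<0$. The differences are bookkeeping only: you handle the $\alpha\mathbf 1_n$ drift by working in projected coordinates, whereas the paper linearizes along the explicit trajectory $r^*(t)=r_0+\alpha t\,\mathbf 1_n$, $\xi^*=-\alpha(A_g^{-1}B_g)\otimes\mathbf 1_n$; and you justify the factorization through a block-triangular (Schur) decomposition, whereas the paper reads it directly off the determinant.
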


\begin{proof}

For the payoff-based higher-order RD \eqref{eq:predictive_RD}, let \(g(s)\)
satisfy Assumption~\ref{ass:filter}, and define
\(
    G(s)=\frac{1}{s}+g(s).
\) Since \(x^*\in\operatorname{Int}(\Delta_n)\) is a Nash equilibrium, all pure strategies have equal payoff at \(x^*\). Hence
\(
    \mathcal G(x^*)=\alpha\mathbf 1_n
\)
for some \(\alpha\in\mathbb R\).
We first describe a closed-loop trajectory whose induced strategy is stationary
at \(x^*\). Choose \(r_0\in\mathbb R^n\)
such that
\(
    \sigma(r_0)=x^* .
\)
Since \(A_g\) is Hurwitz, it is nonsingular. For the constant payoff input
\(p^*=\alpha\mathbf 1_n\), define
\(
    r^*(t)=r_0+\alpha t\,\mathbf 1_n,
    \; 
    \xi^*=-\alpha(A_g^{-1}B_g)\otimes\mathbf 1_n .
\)
Then
\(
    (A_g\otimes I_n)\xi^*+(B_g\otimes I_n)p^*=0,
\)
and
\(
    m^*=(C_g\otimes I_n)\xi^*
    =
    -\alpha C_gA_g^{-1}B_g\,\mathbf 1_n .
\)
Therefore,
\(
    z^*(t)=r^*(t)+m^*
\)
differs from \(r_0\) only by a time-varying multiple of \(\mathbf 1_n\). By
the shift-invariance of the softmax map,
\(
    \sigma(z^*(t))=\sigma(r_0)=x^*,
    \; \forall t\ge0 .
\)
Thus the closed-loop dynamics admit a trajectory whose induced strategy is
stationary at \(x^*\).

Linearizing the closed-loop dynamics around this trajectory gives
\eqref{eq:lineaized_dynamics}, with
\(    F=\nabla\mathcal G(x^*),
    \;
    Q^*=\operatorname{diag}(x^*)-x^*{x^*}^{\top}.
\)
Projecting the linearized dynamics onto the tangent space of \(\Delta_n\) gives
\eqref{eq:projected_dynamics}, where
\(
    \widetilde Q=N^\top Q^*N,
    \;
    \widetilde F=N^\top\nabla\mathcal G(x^*)N .
\)
Taking the frequency-domain representation of \eqref{eq:projected_dynamics}
yields
\(
    \delta x
    =
    \widetilde QG(s)\widetilde F\delta x .
\)
Therefore, the reduced characteristic equation is
\begin{equation}
\label{eq:reduced_characteristic}
    \det\!\left(I_{n-1}-G(s)\widetilde Q\widetilde F\right)=0 .
\end{equation}

    By assumption, \(\widetilde Q\widetilde F\) has an eigenvalue
\(\lambda\in\mathbb R_{>0}\). Hence
\eqref{eq:reduced_characteristic} contains the scalar factor
\(
    1-\lambda G(s)=0 .
\)
This equation has a positive real root for every admissible
\(g(s)\). Since \(g(s)\) is real-rational, strictly proper, and asymptotically
stable, write
\(
    g(s)=\nu(s)/d(s),
\)
where \(d(s)\) is monic and Hurwitz, and \(\deg \nu<\deg d\). Then
\(
    1-\lambda G(s)=0
\)
is equivalent to
\[
    1-\lambda\left(\frac{1}{s}+\frac{\nu(s)}{d(s)}\right)=0 .
\]
Multiplying by \(sd(s)\) gives
\(
    \phi(s):=(s-\lambda)d(s)-\lambda s\nu(s)=0 .
\)
Since \(d(s)\) is a real monic Hurwitz polynomial, \(d(0)>0\). Therefore,
\(
    \phi(0)=-\lambda d(0)<0 .
\)
Moreover, since \(g(s)\) is strictly proper, the leading term of
\(\phi(s)\) comes from \(sd(s)\) and has positive coefficient. Hence
\(
    \phi(s)\to+\infty
    \; \text{as } s\to+\infty .
\)
By continuity, there exists \(\rho>0\) such that
\(
    \phi(\rho)=0 .
\)
Since \(\rho>0\) and \(d\) is Hurwitz, \(d(\rho)\neq0\). Thus
\(
    1-\lambda G(\rho)=0 .
\)

Consequently, the reduced linearization has a closed-loop pole
\(\rho\in\mathbb R_{>0}\). Therefore, the linearized closed-loop dynamics have
an eigenvalue in the open right-half plane. By Lyapunov's indirect method, the
equilibrium corresponding to \(x^*\) is unstable for every payoff filter
\(g(s)\) satisfying Assumption~\ref{ass:filter}. Consequently, \(x^*\) cannot
be locally asymptotically stabilized by any such payoff-based higher-order RD.
\end{proof}

\begin{example}
\label{ex:unstabilizable_game}
Consider the matrix game \(\mathcal G(x)=Fx\), where
\[
    F=
    \begin{bmatrix}
        1 & -2 & 1\\
        -2 & 1 & 1\\
        1 & 1 & -2
    \end{bmatrix}.
\]
Since \(F\mathbf 1_3=0\), 
\(
    x^*=\frac{1}{3}\mathbf 1_3
\)
is an interior Nash equilibrium. Moreover, using the basis \(N\) in
\eqref{eq:N_projection}, we obtain
\[
    \widetilde F=N^\top F N
    =
    3
    \begin{bmatrix}
        1 & 0\\
        0 & -1
    \end{bmatrix},
    \qquad
    \widetilde Q=N^\top Q^*N=\frac{1}{3}I_2 .
\]
Hence \(\widetilde Q\widetilde F\) has a positive real eigenvalue \(\lambda=1\). Also, since \(\widetilde F\) is nonsingular, the interior Nash equilibrium \(x^*\) is isolated. Therefore, by Theorem~\ref{thm:positive_eig_unstabilizable}, \(x^*\) cannot be locally asymptotically stabilized by any payoff-based higher-order RD whose payoff filter \(g(s)\) satisfies Assumption~\ref{ass:filter}.
\end{example}

\section{Stabilization Beyond Nash Stationarity}\label{sec:beyond_nash_stationarity}
The preceding result shows that the limitation is not merely a consequence of a particular choice of the payoff filter \(g(s)\). Rather, it is tied to the payoff-to-score transfer-function structure \( G(s)=\frac{1}{s}+g(s), \) which is used by payoff-based higher-order RD to preserve Nash stationarity. The integrator term \(1/s\) is essential for this property, but it also imposes structural constraints on the closed-loop characteristic equation. This raises a natural question: what stability properties can be recovered if Nash stationarity is relaxed? 

We now show that if Nash stationarity is relaxed, then local
stability can be recovered for any continuously differentiable game, but at the
cost of stabilizing a logit equilibrium rather than an exact Nash equilibrium.
These logit equilibria approximate Nash equilibria in an entropy-regularized
sense.

We consider the exponential replicator dynamics (Ex-RD)
\begin{equation} \label{eq:exrd} \dot z=-az+kp, \qquad x=\sigma(z), \end{equation} 
where \(a>0\) and \(k>0\). The transfer function from \(p\) to \(z\) is \( G(s)=\frac{k}{s+a}, \) which is asymptotically stable and strictly proper. The dynamics \eqref{eq:exrd} generalize the exponential replicator dynamics introduced in \cite{gadjov2018passivity}, corresponding to the special case \(a=k=1\). 
Unlike payoff-based higher-order RD, Ex-RD does not satisfy Nash stationarity,
as shown in the following lemma.

\begin{lemma}
\label{lem:exrd_not_NS}
Ex-RD \eqref{eq:exrd} does not satisfy Nash stationarity on
\(\Delta_n\) in general. Moreover, an interior Nash equilibrium
\(x^*\in\operatorname{Int}(\Delta_n)\) is stationary under Ex-RD if and only if
\(
    x^*=\frac{1}{n}\mathbf 1_n .
\)
\end{lemma}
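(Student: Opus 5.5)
The plan is to characterize the stationary points of Ex-RD \eqref{eq:exrd} in strategy space, and then compare them with the interior Nash equilibria. A state \(x=\sigma(z)\) is stationary for the closed loop \(\dot z=-az+k\mathcal G(\sigma(z))\) exactly when \(az=k\mathcal G(x)\).

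Because \(\sigma\) is invariant only under shifts along \(\mathbf 1_n\), the relation between scores and strategies is explicit. For an interior \(x\), the preimage is \(\sigma^{-1}(x)=\{\log x + c\mathbf 1_n : c\in\mathbb R\}\), with \(\log\) taken componentwise. Hence \(x\in\operatorname{Int}(\Delta_n)\) is stationary if and only if there is a \(c\in\mathbb R\) with
\[
    a(\log x + c\mathbf 1_n)=k\mathcal G(x),
\]
that is, if and only if \(a\log x - k\mathcal G(x)\in\operatorname{span}\{\mathbf 1_n\}\). This is the logit equilibrium condition \(x=\sigma\big((k/a)\mathcal G(x)\big)\).

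Now let \(x^*\in\operatorname{Int}(\Delta_n)\) be an interior Nash equilibrium, so \(\mathcal G(x^*)=\alpha\mathbf 1_n\) by the characterization recalled in the preliminaries. The stationarity condition then reduces to \(a\log x^*\in\operatorname{span}\{\mathbf 1_n\}\). Since \(a>0\), this holds if and only if all components \(x^*_i\) are equal. Together with \(\mathbf 1_n^\top x^*=1\), that is equivalent to \(x^*=\frac1n\mathbf 1_n\).

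The converse direction is immediate. If \(x^*=\frac1n\mathbf 1_n\) is a Nash equilibrium, then the score \(z^*=(k\alpha/a)\mathbf 1_n\) satisfies \(-az^*+k\mathcal G(\sigma(z^*))=0\), because \(\sigma(z^*)=\frac1n\mathbf 1_n\).

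For the claim that Nash stationarity fails in general, I would exhibit a counterexample: a game with a non-uniform interior Nash equilibrium. A constant game \(\mathcal G\equiv\mathbf 0_n\) works, since every point of \(\Delta_n\) is a Nash equilibrium, but only \(\frac1n\mathbf 1_n\) is stationary. A more convincing choice is a \(2\times2\) matrix game with equilibrium \((1/3,2/3)\). In either case the implication "Nash \(\Rightarrow\) stationary" in Definition~\ref{def:nash_stationarity} is violated.

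The one delicate point is the phrase "stationary under Ex-RD" for a strategy \(x\). It has to mean the existence of a constant score \(z\) with \(\sigma(z)=x\) and \(\dot z=0\), rather than a score that drifts along \(\mathbf 1_n\) as in the proof of Theorem~\ref{thm:positive_eig_unstabilizable}. So I will also check that drifting trajectories do not enlarge the set of stationary strategies. Suppose \(z(t)=z_0+c(t)\mathbf 1_n\) holds \(\sigma(z)\) fixed. Then \(\dot c\,\mathbf 1_n=-az_0-ac\mathbf 1_n+k\mathcal G(x)\), which forces \(az_0-k\mathcal G(x)\in\operatorname{span}\{\mathbf 1_n\}\), the same condition as before. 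Hence the characterization is unchanged.
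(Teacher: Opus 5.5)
Your proposal is correct and follows essentially the same argument as the paper. Both use \(\mathcal G(x^*)=\alpha\mathbf 1_n\) and the preimage \(\{\log x^*+c\mathbf 1_n : c\in\mathbb R\}\) (allowing a drifting \(c(t)\)) to reduce stationarity to \(\log x^*\in\operatorname{span}\{\mathbf 1_n\}\), and both use \(z^*=(k\alpha/a)\mathbf 1_n\) for the converse; the only difference is the witness for the failure of Nash stationarity. The paper observes that boundary Nash equilibria can never be stationary because \(\sigma\) maps into \(\operatorname{Int}(\Delta_n)\), whereas you exhibit a non-uniform interior Nash equilibrium, which has the minor advantage of violating Definition~\ref{def:nash_stationarity} as it is literally stated on \(\operatorname{Int}(\Delta_n)\).
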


\begin{proof}
First, every strategy generated by Ex-RD satisfies
\(
    x(t)=\sigma(z(t))\in\operatorname{Int}(\Delta_n).
\)
Hence no boundary Nash equilibrium can be stationary under Ex-RD.

Now let \(x^*\in\operatorname{Int}(\Delta_n)\) be a Nash equilibrium, and
suppose that Ex-RD generates a constant strategy trajectory
\(
    x(t)\equiv x^* .
\)
Since \(x^*\) is an interior and Nash equilibrium,
\(
    \mathcal G(x^*)=\alpha\mathbf 1_n
\)
for some \(\alpha\in\mathbb R\). Also, since \(\sigma(z(t))=x^*\), there exists a scalar function \(c(t)\) such that
\(
    z(t)=\log(x^*)+c(t)\mathbf 1_n .
\)
Substituting into Ex-RD gives
\[
    \dot c(t)\mathbf 1_n
    =
    -a\log(x^*)+\big(k\alpha-ac(t)\big)\mathbf 1_n .
\]
Therefore, we must have
\(
    \log(x^*)\in\operatorname{span}\{\mathbf 1_n\}.
\)
Equivalently,
\(
    x^*=\frac{1}{n}\mathbf 1_n .
\)
Thus, any interior Nash equilibrium that is stationary under Ex-RD must be the uniform distribution.

Conversely, suppose
\(
    x^*=\frac{1}{n}\mathbf 1_n
\)
is a Nash equilibrium. Then \(\mathcal G(x^*)=\alpha\mathbf 1_n\) for some
\(\alpha\in\mathbb R\). Choose
\(
    z^*=\frac{k}{a}\alpha\mathbf 1_n .
\)
Then
\(
    -az^*+k\mathcal G(x^*)=0,
\)
and
\(
    \sigma(z^*)=\frac{1}{n}\mathbf 1_n=x^* .
\)
Hence \(x^*\) is stationary under Ex-RD.

Therefore, among interior Nash equilibria, a Nash equilibrium is stationary
under Ex-RD if and only if it is the uniform distribution. Since boundary Nash equilibria are never
stationary under Ex-RD, Ex-RD does not satisfy Nash stationarity on \(\Delta_n\) in general.
\end{proof}

 The next theorem shows that, for any continuously differentiable game, the parameters of Ex-RD \eqref{eq:exrd} can be chosen so that at least one stationary strategy is locally asymptotically stable. The stabilized stationary strategy is a logit equilibrium, and hence need not be a Nash equilibrium.
 
\begin{theorem}
\label{thm:exrd_stable_logit}
Consider a continuously differentiable game
\(\mathcal G:\Delta_n\to\mathbb R^n\). Then there exist \(a>0\) and \(k>0\)
such that the Ex-RD \eqref{eq:exrd}
admits an interior stationary strategy \(x^*\in\operatorname{Int}(\Delta_n)\) that is locally asymptotically stable. Moreover, \(x^*\) is a logit equilibrium:
\(
    x^*=\sigma\!\left(\frac{k}{a}\mathcal G(x^*)\right).
\)
In general, \(x^*\) need not be a Nash equilibrium.
\end{theorem}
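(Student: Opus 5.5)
Our approach is to choose the gain ratio \(\beta:=k/a\) small enough that the logit fixed-point map is a contraction, and then to take \(a\) (hence \(k=\beta a\)) arbitrary. The closed loop \eqref{eq:exrd} with \(p=\mathcal G(x)\) reads
\[
\dot z=-az+k\mathcal G(\sigma(z)).
\]
Its equilibria are exactly the points \(z^*=\beta\,\mathcal G(\sigma(z^*))\). Setting \(x^*=\sigma(z^*)\), such a point satisfies \(x^*=\sigma(\beta\mathcal G(x^*))\), which is the logit equilibrium condition. Conversely, any fixed point of \(T(x):=\sigma(\beta\mathcal G(x))\) gives the equilibrium \(z^*=\beta\mathcal G(x^*)\). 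Existence of a fixed point follows from Brouwer's theorem, since \(T\) maps the compact convex set \(\Delta_n\) continuously into itself. Its image lies in \(\operatorname{Int}(\Delta_n)\), so \(x^*\) is interior.

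For stability we linearize in \(z\) around \(z^*\), which is a genuine equilibrium of the full \(z\)-dynamics (no projection is needed, unlike in Theorem~\ref{thm:positive_eig_unstabilizable}). The Jacobian is
\[
J=-aI_n+k\,\nabla\mathcal G(x^*)\,\nabla\sigma(z^*)=a\big(-I_n+\beta\,\nabla\mathcal G(x^*)Q^*\big),
\]
where \(Q^*=\nabla\sigma(z^*)\). Since \(\mathcal G\) is \(C^1\) on the compact set \(\Delta_n\) (extended to a neighborhood, or using tangent derivatives), the quantity \(L:=\sup_{x\in\Delta_n}\|\nabla\mathcal G(x)\|_2\) is finite. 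Also \(\|\nabla\sigma(v)\|_2\le \tfrac12\) for all \(v\); it suffices to note that the largest eigenvalue of \(\operatorname{diag}(x)-xx^\top\) is at most \(\max_i x_i\le1\). Hence every eigenvalue \(\lambda\) of \(J\) satisfies
\[
\operatorname{Re}\lambda\le a(-1+\beta L/2).
\]
Choosing \(\beta<2/L\) (any \(\beta>0\) if \(L=0\)) therefore makes \(J\) Hurwitz. By Lyapunov's indirect method \(z^*\) is then locally exponentially stable for the \(z\)-dynamics. Continuity of \(\sigma\) transfers this to local asymptotic stability of the stationary strategy \(x^*=\sigma(z^*)\). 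The one point that needs care is which notion is meant: for strategies near \(x^*\), the corresponding initial scores can be chosen near \(z^*\) (for instance \(z_0=\log x_0+c\mathbf 1_n\) with a suitable shift). Because the \(\mathbf 1_n\) direction is itself damped at rate \(a\), we will argue via the \(z\)-state, as in the rest of the paper.

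This bound also gives uniqueness. The same estimate shows that \(T\) is a contraction on \(\Delta_n\) when \(\beta L/2<1\), so the logit equilibrium is unique, although uniqueness is not required.

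Finally, for the claim that \(x^*\) need not be Nash, we invoke Lemma~\ref{lem:exrd_not_NS}. Take any game whose unique NE is interior but not uniform; a strictly contractive matrix game with a non-uniform interior equilibrium will do. The lemma shows this NE is not stationary under Ex-RD, so the stabilized \(x^*\) differs from it.

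The main obstacle is the uniform bound on \(\nabla\mathcal G\) together with the precise bound on \(\|\nabla\sigma\|\). If the constant \(\tfrac12\) turns out to be awkward to justify, we will instead use the cruder bound \(\|\nabla\sigma\|_2\le1\) and choose \(\beta<1/L\), which is enough for the argument.
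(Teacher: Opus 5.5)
Your proof is correct and follows the paper's route. Both arguments use Brouwer's theorem on $T(x)=\sigma(\beta\mathcal G(x))$ to get an interior logit equilibrium, linearize the score dynamics, and use a derivative bound that is uniform over the compact simplex (hence independent of the $\beta$-dependent $x^*$) to choose $\beta=k/a$ small. The only variation is that you bound the spectrum of the full Jacobian $-aI_n+k\nabla\mathcal G(x^*)Q^*$ directly, whereas the paper projects onto $\mathcal Z$ to obtain $a(-I_{n-1}+\eta\widetilde F\widetilde Q)$ and treats the $\mathbf 1_n$ direction separately via $Q^*\mathbf 1_n=0$.

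One small point: your stated reason ($\lambda_{\max}\le\max_i x_i$) only gives $\|\nabla\sigma\|_2\le 1$, not $\tfrac12$. The constant $\tfrac12$ does hold by Gershgorin, since row $i$ of $\operatorname{diag}(x)-xx^\top$ has diagonal entry and off-diagonal absolute row sum both equal to $x_i(1-x_i)\le\tfrac14$. Your fallback $\beta<1/L$ works in any case.
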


\begin{proof}
Let
\(
    \eta:=\frac{k}{a}>0 .
\)
A stationary strategy of Ex-RD satisfies
\(
    0=-az^*+k\mathcal G(x^*),
    \;
    x^*=\sigma(z^*).
\)
Equivalently,
\(
    z^*=\eta\mathcal G(x^*),
\)
and therefore
\(
    x^*=\sigma\!\left(\eta\mathcal G(x^*)\right).
\)
Thus stationary strategies of Ex-RD are precisely logit equilibria with
parameter \(\eta\).  For any fixed \(\eta>0\), define
\(
    T_\eta(x):=\sigma\!\left(\eta\mathcal G(x)\right).
\)
Since \(\mathcal G\) and \(\sigma\) are continuous, \(T_\eta\) is a continuous
map from \(\Delta_n\) into \(\operatorname{Int}(\Delta_n)\subset\Delta_n\).
Hence, by Brouwer's fixed-point theorem, there exists
\(x^*\in\operatorname{Int}(\Delta_n)\) such that
\(
    x^*=T_\eta(x^*).
\)
Setting
\(
    z^*=\eta\mathcal G(x^*)
\)
then gives \(x^*=\sigma(z^*)\) and
\(
    -az^*+k\mathcal G(x^*)=0.
\)
Thus \(x^*\) is a stationary strategy of Ex-RD.

It remains to choose \(\eta\) so that this stationary strategy is locally
asymptotically stable. Linearizing Ex-RD around \(z^*\) gives
\[
    \Delta\dot z
    =
    -a\Delta z
    +
    k\nabla\mathcal G(x^*)\Delta x,
    \qquad
    \Delta x=Q^*\Delta z,
\]
where
\(
    Q^*=\operatorname{diag}(x^*)-x^*{x^*}^{\top}.
\)
Projecting the score linearization onto the tangent space \(\mathcal Z\) gives
\[
    \delta\dot z
    =
    a\left(-I_{n-1}+\eta\widetilde F\widetilde Q\right)\delta z,
\]
where
\(
    \widetilde Q=N^\top Q^*N,
    \;
    \widetilde F=N^\top\nabla\mathcal G(x^*)N .
\)
Let \(\lambda_1,\ldots,\lambda_{n-1}\) denote the eigenvalues of
\(\widetilde F\widetilde Q\). Then the eigenvalues of the reduced score
linearization are
\(
    a(-1+\eta\lambda_i),
    \; i=1,\ldots,n-1 .
\)

For \(x\in\Delta_n\), define
\(
    Q(x):=\operatorname{diag}(x)-xx^\top,
    \;
    \widetilde Q(x):=N^\top Q(x)N,
    \;
    \widetilde F(x):=N^\top\nabla\mathcal G(x)N .
\)

Since \(\mathcal G\) is continuously differentiable and \(\Delta_n\) is compact,
there exists \(L>0\), independent of \(x^*\), such that
\(
    \|\widetilde F(x)\widetilde Q(x)\|_2\le L,
    \;
    \forall x\in\Delta_n .
\)
Then every eigenvalue \(\lambda_i\) of \(\widetilde F\widetilde Q\) satisfies
\(
    \operatorname{Re}(\lambda_i)\le |\lambda_i|\le L .
\)
Choose
\(
    0<\eta<1/L.
\)
Therefore,
\[
    \operatorname{Re}(-1+\eta\lambda_i)
    =
    -1+\eta\operatorname{Re}(\lambda_i)
    \le
    -1+\eta L
    <0 .
\]
Thus
\(
    -I_{n-1}+\eta\widetilde F\widetilde Q
\)
is Hurwitz. Since \(a>0\), the reduced score linearization is asymptotically
stable.

The remaining score direction is also asymptotically stable. Indeed,
because \(Q^*\mathbf 1_n=0\),
\(
    \left(-aI_n+k\nabla\mathcal G(x^*)Q^*\right)\mathbf 1_n
    =
    -a\mathbf 1_n .
\)
Hence the full score linearization is Hurwitz. Therefore, by Lyapunov's indirect method, the stationary strategy \(x^*\) is locally asymptotically
stable. 

Finally, \(x^*\) is a logit equilibrium by construction. In general, it need
not be a Nash equilibrium, since Ex-RD is not Nash stationary by
Lemma~\ref{lem:exrd_not_NS}.
\end{proof}

\begin{remark}
\label{rem:temperature_tradeoff}
The logit equilibrium stabilized in Theorem~\ref{thm:exrd_stable_logit} can be
made closer to satisfying the Nash condition, in the \(\varepsilon\)-Nash sense,
by introducing a softmax temperature. For \(\tau>0\), define
\(
    \sigma_\tau(z)_i
    =
    \exp(z_i/\tau)
    /\sum_{\ell=1}^n \exp(z_\ell/\tau) .
\)
If Ex-RD is implemented with \(x=\sigma_\tau(z)\), then any stationary strategy
satisfies
\(
    x^*
    =
    \sigma_\tau\!\left(\frac{k}{a}\mathcal G(x^*)\right)
    =
    \sigma\!\left(\frac{k}{a\tau}\mathcal G(x^*)\right).
\)
Thus \(x^*\) is a logit equilibrium with effective logit parameter
\(
    \eta=k/a\tau.
\)
Equivalently, \(x^*\) solves the entropy-regularized best-response problem 
\[ x^* \in \arg\max_{x\in\Delta_n} \left\{ x^\top \mathcal G(x^*) - \frac{1}{\eta}\sum_{i=1}^n x_i\log x_i \right\}. \] 
Therefore, for every \(y\in\Delta_n\), 
\[ y^\top\mathcal G(x^*) - {x^*}^{\top}\mathcal G(x^*) \le \frac{1}{\eta} \left( \sum_{i=1}^n y_i\log y_i - \sum_{i=1}^n x_i^*\log x_i^* \right). \] 

Using the entropy bound \( -\log n \le \sum_{i=1}^n x_i\log x_i \le 0, \; \forall x\in\Delta_n, \) we obtain \[ \max_{y\in\Delta_n} y^\top\mathcal G(x^*) - {x^*}^{\top}\mathcal G(x^*) \le \frac{\log n}{\eta} = \frac{a\tau}{k}\log n . \] Thus \(x^*\) is an \(\varepsilon\)-Nash equilibrium with \( \varepsilon=\frac{a\tau}{k}\log n . \) Consequently, decreasing \(\tau\) improves the Nash approximation.

However, the same decrease in \(\tau\) increases the  loop gain
\(\eta=k/(a\tau)\). In the proof of
Theorem~\ref{thm:exrd_stable_logit}, local stability is guaranteed whenever
\(
    \eta<\frac{1}{L},
\)
where \(L>0\) satisfies
\(
    \|\widetilde F(x)\widetilde Q(x)\|_2\le L,
    \;
    \forall x\in\Delta_n .
\)
With temperature \(\tau\), this sufficient stability condition becomes
\(
    \frac{k}{a\tau}<\frac{1}{L},
\)
or equivalently,
\(
    \tau>\frac{kL}{a}.
\)
Consequently, decreasing \(\tau\) improves the \(\varepsilon\)-Nash bound but
can violate the local stability condition. This reveals a tradeoff between
approximate Nash behavior and local closed-loop stability.
\end{remark}

\section{Conclusion}
This paper studied stabilization limits of payoff-based higher-order RD in population games. We first established a necessity result for passivity: if the payoff filter \(g(s)\) is not passive, then there exists a static strictly contractive game whose interior Nash equilibrium is unstable under the resulting closed-loop learning dynamics. Thus, passivity is necessary for universal local stability over contractive games. We then showed that payoff-based higher-order RD has an intrinsic stabilization limitation: there exists a class of games with isolated interior Nash equilibria that cannot be locally asymptotically stabilized by an asymptotically stable strictly proper \(g(s)\). Finally, we showed that this limitation can be bypassed if Nash stationarity is relaxed. In particular, Ex-RD can locally stabilize a logit equilibrium for any continuously differentiable game, although the stabilized equilibrium is generally not a Nash equilibrium. The temperature-dependent formulation further reveals a tradeoff between approximate Nash behavior and local stability. Future work will study robustness of payoff-based higher-order RD in contractive games, including convergence and stability under noisy payoff observations and uncertain game mappings.

\section{References}

\bibliographystyle{IEEEtran}
\bibliography{refs}

\end{document}